\documentclass{article}
\usepackage{graphicx} % Required for inserting images
\usepackage{hyperref}
\usepackage{url}
\usepackage{amsmath}
\usepackage{amssymb}
\usepackage{geometry}
\usepackage{multirow}
\usepackage{todonotes}
\usepackage{booktabs}
\usepackage{hyperref} 
\usepackage{url}

\newcommand{\Eidolon}{{\tt Eidolon}}

\usepackage{tikz}
\usepackage{array}
\usepackage{stmaryrd} 

\usepackage{algorithm}
\usepackage{algpseudocode}

\usepackage{listings}
\usepackage{xcolor}

\newcommand{\email}[1]{\texttt{#1}}

\definecolor{codegreen}{rgb}{0,0.6,0}
\definecolor{codegray}{rgb}{0.5,0.5,0.5}
\definecolor{codepurple}{rgb}{0.58,0,0.82}
\definecolor{backcolour}{rgb}{0.95,0.95,0.92}

\lstdefinestyle{mystyle}{
    backgroundcolor=\color{backcolour},
    commentstyle=\color{codegreen},
    keywordstyle=\color{magenta},
    numberstyle=\tiny\color{codegray},
    stringstyle=\color{codepurple},
    basicstyle=\ttfamily\footnotesize,
    breakatwhitespace=false,
    breaklines=true,
    captionpos=b,
    keepspaces=true,
    numbers=left,
    numbersep=5pt,
    showspaces=false,
    showstringspaces=false,
    showtabs=false,
    tabsize=2,
    frame=single,
    rulecolor=\color{gray!30}
}

\usepackage{amsthm}   % For theorem environments
\usepackage[utf8]{inputenc}
\usepackage[T1]{fontenc}
\usepackage{lmodern}

\theoremstyle{definition}
\newtheorem{definition}{Definition}[section]

\theoremstyle{plain}
\newtheorem{proposition}{Proposition}[section]

\newtheorem{lemma}{Lemma}[section]

\theoremstyle{remark}

\usepackage{float}

\title{\Eidolon{}: A Post-Quantum Signature Scheme Based on k-Colorability in the Age of Graph Neural Networks}

\author{
Asmaa Cherkaoui\thanks{Laboratory of Mathematical Analysis, Algebra and Applications (LAM2A), Faculty of Sciences Ain Chock (FSAC), University Hassan II, Casablanca, Morocco. \email{esma1maysan@gmail.com}} \\
Ram\'on Flores\thanks{Department of Geometry and Topology, Faculty of Mathematics, University of Seville, Seville, Spain. 
\email{ramonjflores@us.es}}\\
Delaram Kahrobaei\thanks{Departments of Computer Science and Mathematics, Queens College, City University of New York, USA;
PhD Program in Mathematics, and Initiative for the Theoretical Sciences, Graduate Center, City University of New York, USA;
Department of Computer Science and Engineering, Tandon School of Engineering, New York 
University, USA; Department of Computer Science, University of York, United Kingdom
\email{delaram.kahrobaei@qc.cuny.edu}} \\
Richard C. Wilson\thanks{Department of Computer Science, University of York, United Kingdom
\email{richard.wilson@york.ac.uk}} 
}

\begin{document}

\maketitle

\begin{abstract}
We propose \Eidolon{}, a post-quantum signature scheme grounded in the NP-complete $k$-colorability problem. Our construction generalizes the Goldreich–Micali–Wigderson zero-knowledge protocol to arbitrary $k \geq 3$, applies the Fiat–Shamir transform, and uses Merkle-tree commitments to compress signatures from $O(tn)$ to $O(t \log n)$. We generate instances by planting a coloring while aiming to preserve the statistical profile of random graphs. We present an empirical security analysis of such a scheme against both classical solvers (ILP, DSatur) and a custom graph neural network (GNN) attacker. Experiments show that for $n \geq 60$, neither approach is able to recover a valid coloring matching the planted solution, suggesting that well-engineered $k$-coloring instances can resist the considered classical and learning-based cryptanalytic approaches. These experiments indicate that the constructed instances resist the attacks considered in our evaluation.
\end{abstract}

\noindent\textbf{Keywords:} Post-quantum cryptography, $k$-colorability, NP-hard signatures, zero-knowledge proofs, fiat–shamir transform, graph neural networks.

\section{Introduction}

With the rising threat of quantum computers to traditional cryptography, recent attention has focused on post-quantum cryptographic methods. Based on the current belief that there is no quantum speed-up for NP-complete problems, these problems are potentially a rich source of potential cryptosystems. In particular, graph theory contains several NP-complete problems, including homomorphism, \(k\)-colorability, and Hamiltonicity. In this paper, we study the application of \(k\)-colorability to a post-quantum signature system.

When considering an NP-hard problem as the basis for a cryptosystem, care must be taken in the concrete realization of a particular instance. This is because many instances can be solved in acceptable time using heuristic algorithms. Indeed, a number of heuristics are available for graph coloring and can quickly solve the problem for many graphs.

In this regard, machine learning has emerged as a particular threat because of its ability to extract powerful heuristics directly from data. In this paper, we explore the issue of instance hardness for \(k\)-colorability, from both a theoretical and empirical perspective, to demonstrate the security of the system. In \cite{SBK}, the authors demonstrate how graph neural networks can be used to solve combinatorial optimization problems. Their approach is broadly applicable to canonical NP-hard problems in the form of quadratic unconstrained binary optimization problems, such as maximum cut, minimum vertex cover, maximum independent set, as well as Ising spin glasses and higher-order generalizations thereof in the form of polynomial unconstrained binary optimization problems.

The rapid progress in quantum computing has intensified the search for cryptographic primitives that remain secure in a post-quantum world. While lattice- and code-based schemes currently dominate standardization efforts, combinatorial problems offer an alternative foundation rooted in computational complexity theory. The use of \(k\)-colorability in cryptography is limited by the gap between worst-case hardness and typical-instance behavior: many graph instances that are hard in the worst-case sense can still be handled efficiently by heuristic algorithms such as DSatur~\cite{Brelaz}, and more recently by data-driven methods leveraging machine learning.

Recent work by Schuetz, Brubaker, and Katzgraber~\cite{SBK} demonstrates that graph neural networks (GNNs) can effectively approximate solutions to canonical NP-hard problems, including graph coloring, by learning implicit heuristics from data. This motivates examining whether such methods affect the security of schemes based on combinatorial hardness. To date, to the best of our knowledge, no signature scheme based on \(k\)-colorability has been proposed that simultaneously offers a complete construction, incorporates instance generation that embeds a secret coloring while aiming to maintain statistical indistinguishability from a random graph, and undergoes empirical validation against both classical and learning-based attacks.

In this paper, we present \Eidolon{}, a post-quantum digital signature scheme derived from the \(k\)-colorability problem by generalizing the zero-knowledge identification protocol of Goldreich, Micali, and Wigderson (GMW)~\cite{GMW}, originally demonstrated for 3-colorability, to arbitrary \(k \geq 3\), and then applying the Fiat--Shamir transform. Our design features a planted coloring mechanism, inspired by the ``quiet solution'' framework of Krzakala and Zdeborová~\cite{KZ}, which embeds a secret \(k\)-coloring into a \(k\)-partite random graph with calibrated edge density, ensuring witness existence while aiming to maintain statistical indistinguishability from an Erd\H{o}s--R\'enyi random graph. To reduce the signature size obtained from a direct Fiat--Shamir transformation of the underlying zero-knowledge protocol, we use Merkle-tree vector commitments, compressing the per-round vertex commitments into a single root and thereby reducing the overall signature size from \(O(tn)\) to \(O(t\log n)\), where \(t\) denotes the number of Fiat--Shamir rounds and \(n = |V|\).

We evaluate the scheme empirically using two attack strategies: the classical DSatur heuristic and a custom-designed graph neural network (GNN). Our GNN is inspired by the framework of Schuetz, Brubaker, and Katzgraber~\cite{SBK}. Both attackers are tested on the same class of \(k\)-partite random graphs. Our experiments show that, for graphs with \(n \geq 60\) vertices and our chosen density parameters, neither DSatur nor our GNN is able to recover a coloring matching the planted solution, providing empirical evidence, within the tested regime, that the hardness of our \(k\)-colorability instances withstands the considered classical and learning-based cryptanalytic approaches.

In summary, our contributions are threefold: (i) we design \Eidolon{}, a Fiat--Shamir signature scheme based on \(k\)-colorability with a statistically hidden planted coloring; (ii) we reduce its asymptotic signature size via Merkle-tree vector commitments; and (iii) we provide an empirical hardness study against both classical heuristics and GNN-based attacks. The rest of the paper is organized as follows: Section~\ref{InstanceSelection} reviews graph coloring and instance generation; Section~\ref{sec:scheme} details our signature construction; and Section~\ref{sec:attacks} describes our attack models and experimental setup.

\section{Graph instance generation}
\label{InstanceSelection}

Let $G=(V,E)$ be a graph with vertices $V$ and edges $E$. A $k$-coloring of a graph is an assignment of one of $k$ colors to each vertex so that no two adjacent vertices share the same color. It is known that determining whether a graph has a $k$-coloring is NP-complete for $k \geq 3$. The minimum number of colors needed is the \emph{chromatic number} $\chi(G)$, and computing it is NP-hard.

However, worst-case hardness alone is insufficient for cryptography: many concrete instances are easy to solve. For example, if $k = |V|$, a trivial coloring exists; complete graphs have $\chi(G) = |V|$; and heuristic algorithms like DSatur often find valid colorings quickly on structured or sparse graphs. This creates a constraint for signature schemes based on \(k\)-colorability: the prover must know a valid \(k\)-coloring, which serves as the secret key, while the public graph should not reveal that coloring to the verifier or to an adversary.

Thus, we must \emph{construct} instances where:
\begin{itemize}
    \item a valid $k$-coloring is known by design to the prover,
    \item the graph appears statistically indistinguishable from a random hard instance,
    \item and recovering the coloring remains infeasible for classical and ML-based attackers.
\end{itemize}
This ensures the protocol is both \emph{correct} (the prover can always respond) and \emph{secure} (the secret remains hidden).

An (Erd\"os-R\'enyi) random graph is a graph $G_R(n,p)$ on $n$ vertices, where each vertex pair is joined uniformly at random with probability $p$. A number of results are available on the hardness of coloring random graphs. In \cite{Garey} it is shown that coloring a graph with less than $2\chi(G)-\delta$ is NP-hard. For fixed $p$, almost every random graph has a chromatic number \cite{Bollobas}
\begin{equation}
    \chi(G_R(n,p))=\frac{n}{r}(1+\epsilon)
\end{equation}
with $0\leq \epsilon\leq 3\log \log n/\log n$ and
\begin{equation}
r=2\log_d n-\log_d\log_d n+2\log_d(e/2)+1.
\end{equation}
These results hold in the limit of large $n$.
Furthermore, we observe that very sparse and very dense graphs are easy to color, and so we select $p=1/2$. In this case, it has been shown that  $\chi(G_R(n,p))\sim \log\left(1/(1-p)\right)/2{\log np}$ \cite{Mcdiarmid}.
However, these results are asymptotic and may not hold for the small graphs under consideration here. We reserve selection of $k$ for the planted coloring until our empirical analysis, noting only that it should be the same size as the expected chromatic number of the equivalent random graph.

\subsection{Planted $k$-colorable graph construction}
\label{PlantedSolution}
In order to operate the digital signature scheme above, it is necessary to be able to construct a graph and $k$-coloring in polynomial time, where the coloring is difficult to discover. We use the natural algorithm for planting a known coloring in a random graph \cite{KZ}. In particular, we begin by selecting a vertex set $V$, $|V|=n$ (the problem size) and $k\approx\chi[G_R(n,p)]$. We then follow the following steps.
\begin{itemize}
    \item $V$ is partitioned into $k$ sets such that $n_i=|V_i|, n_i\in \left\{ \lfloor n/k \rfloor, \lceil n/k \rceil\right\}$ and $\sum_i n_i=n$
    \item We iterate through all pairs of vertices $(u,v)$
    \item Let $P_u,P_v$ be the partitions of the vertices. We join the vertices with probability $p$ if $P_u\neq P_v$ and probability zero otherwise.
\end{itemize}

The resulting graph may be colored with $k$ colors simply by assigning one color to each partition. In \cite{KZ}, the authors demonstrate that this is a \emph{quiet solution} which is hidden in the random graph in the sense that the properties of the graph are not much altered by the existence of the known extra solution. In particular, they conjecture that the hardness of the problem is the same as for the original graph.

In general we work with random graphs such that the probability of an edge between two given vertices is a fixed number $p\in [0,1]$, in such a way that if the graph has $n$ vertices, the expected final number of edges is $p {n\choose 2}$. We consider here a multipartite graph $\Gamma$ with $n$ vertices and a partition in the set of vertices given by $V=V_1\cup\ldots V_k$. 
Given a number $s\in [0,1]$, we compute here the probability $p$ of existence of an edge in $\Gamma$ such that the expected number of edges in the graph is $s {n\choose 2}$.

Recall that a complete graph in $n_j$ vertices has ${n_j\choose 2}$ edges. Hence, the number of forbidden edges in $\Gamma$ is $S=\sum_{j=1}^k {n_j\choose 2}$. Then, the maximum number of edges of the graph $\Gamma$ is ${n\choose 2}-S$, and given a probability $p$ of existence of an edge, the expected number of edges in the graph is $p ({n\choose 2}-S)$. We have imposed that $$p ({n\choose 2}-S)=s {n\choose 2},$$ and hence $$p=\frac{s {n\choose 2}}{{n\choose 2}-S}.$$

Observe that $s {n\choose 2}$ is always bounded in the graph $\Gamma$ by ${n\choose 2}-S$, and hence we always obtain $p\leq 1$. 
Edges in the graph are selected with probability $p$. We discuss the specific values of $p$ and $k$ in Section \ref{sec:attacks} on our security analysis.

%For example, if $\Gamma$ is $(8,6)$-bipartite and we expect 40 edges in the outcome, then $s=0.439$ and $p=0.832$. 

\section{Protocol description} 
\label{sec:scheme}
We describe the protocol underlying \Eidolon{}, based on graph \(k\)-colorability. Let
\[
f:\{1,\dots,k\}\times\{0,1\}^r \longrightarrow \{0,1\}^s
\]
be a statistically hiding and computationally binding commitment function, as in \cite{GMW}. One round of the protocol is as follows. The protocol is then repeated independently \(T\) times.

\begin{itemize}
  \item \textbf{Coloring.} The prover holds a valid $k$-coloring
  $\phi:V\to\{1,\dots,k\}$.

  \item \textbf{Permutation.} The prover samples a fresh random permutation
  $\pi\in S_k$ to mask color labels.

  \item \textbf{Commitment.} For each vertex $i\in V$, the prover samples fresh
  randomness $r_i$ and commits to the permuted color
  \[
  c_i \;=\; f\big(\pi(\phi(i)),\,r_i\big).
  \]
  The prover sends all commitments $\{c_i\}_{i\in V}$ to the verifier.

  \item \textbf{Challenge.} The verifier chooses a uniform random edge
  $(u,v)\in E$ and requests openings for its endpoints.

  \item \textbf{Response.} The prover reveals
  $\big(\pi(\phi(u)),r_u\big)$ and $\big(\pi(\phi(v)),r_v\big)$.

  \item \textbf{Verification.} The verifier checks
\[
f\big(\pi(\phi(u)),r_u\big)\stackrel{?}{=}c_u,\qquad
f\big(\pi(\phi(v)),r_v\big)\stackrel{?}{=}c_v,
\]
and verifies that
\[
(u,v)\in E,\qquad
\pi(\phi(u)),\pi(\phi(v))\in\{1,\dots,k\},\qquad
\pi(\phi(u))\neq \pi(\phi(v)).
\]
The round is accepted if all checks pass.
\end{itemize}

\noindent\textbf{Zero-knowledge intuition.}
In each round, the prover uses a fresh permutation \(\pi\) and fresh commitment randomness. The verifier therefore learns only that the challenged edge joins two differently colored vertices under the permuted coloring.

\subsection{Soundness} 
To analyze the soundness of the zero-knowledge proof protocol for \( k \)-colorability, consider the case where the graph is \emph{not} \( k \)-colorable. In such a scenario, no matter how the prover attempts to simulate a valid coloring, there will inevitably be a set of \emph{violating edges}--edges whose endpoints receive the same color under any attempted coloring. Let \( t \) denote the number of these bad edges, and let \( m \) be the total number of edges in the graph. Since the verifier selects one edge uniformly at random in each round, the probability that a cheating prover is caught in a single round is at least \( \frac{t}{m} \), while the probability of escaping detection is at most \( 1 - \frac{t}{m} \).

Although this is the general case, we often assume a worst-case scenario where \( t \geq 1 \) but unknown. In this case, we conservatively lower-bound the detection probability per round by \( \frac{1}{m} \). The protocol is repeated independently for \( m^2 \) rounds to drive the cheating probability down. Therefore, the probability that a cheating prover escapes detection across all \( m^2 \) rounds is:
\[
\left(1 - \frac{1}{m} \right)^{m^2} = \left[ \left(1 - \frac{1}{m} \right)^m \right]^m.
\]
It is well known that:
\[
\lim_{m \to \infty} \left(1 - \frac{1}{m} \right)^m = \frac{1}{e},
\]
so it follows that:
\[
\left(1 - \frac{1}{m} \right)^{m^2} \approx \left( \frac{1}{e} \right)^m = e^{-m}.
\]
This quantity becomes \emph{negligibly small} as \( m \) increases, for instance, \( e^{-40} < 2^{-55} \). Thus, the verifier's probability of accepting a false claim is exponentially small in the number of edges.

The number of rounds is taken to depend on the number of edges \(m\), rather than on the number of vertices \(n\) or the number of colors \(k\), because the verifier’s challenges are edge-based and violations are detected on edges.

\subsection{Identification and signature construction}

We use a GMW-type identification protocol for graph \(k\)-coloring, following the classical approach of Goldreich, Micali, and Wigderson~\cite{GMW}. The prover commits to a randomly permuted coloring of the public graph and, upon receiving a challenge edge, opens only the two commitments corresponding to its endpoints. The verifier then checks that the revealed colors are valid, distinct, and consistent with the commitments.

\begin{center}
\begin{tabular}{>{\raggedright}p{7cm} >{\raggedright\arraybackslash}p{7cm}}
\textbf{PROVER} & \textbf{VERIFIER} \\
\hline
Hold a valid coloring $\phi:V\to\{1,\dots,k\}$ & \\
Sample a fresh permutation $\pi\in S_k$ and fresh $r_v$ for all $v\in V$ & \\
Compute $c_v \,=\, f\!\big(\pi(\phi(v)),\, r_v\big)$ for all $v\in V$ & \\
Set $C \,=\, \{c_v\}_{v\in V}$ & \\
\multicolumn{2}{c}{$\xrightarrow{\hspace{1em}C\hspace{1em}}$} \\
& Sample a uniform random edge $(u,v)\in E$ \\
\multicolumn{2}{c}{$\xleftarrow{\hspace{1em}(u,v)\hspace{1em}}$} \\
Reveal $(\pi(\phi(u)), r_u)$ and $(\pi(\phi(v)), r_v)$ & \\
\multicolumn{2}{c}{$\xrightarrow{\hspace{1em}(\pi(\phi(u)), r_u),\; (\pi(\phi(v)), r_v)\hspace{1em}}$} \\
& Accept if \\
& \quad $(u,v)\in E$ \\
& \quad $f(\pi(\phi(u)), r_u) = c_u$ \quad and \quad $f(\pi(\phi(v)), r_v) = c_v$ \\
& \quad $\pi(\phi(u)),\pi(\phi(v))\in\{1,\dots,k\}$ \quad and \quad $\pi(\phi(u)) \ne \pi(\phi(v))$ \\
\hline
\end{tabular}
\vspace{0.5em}
Table 1: Graph $k$-Coloring Identification Scheme (one round; repeat $T$ times with fresh $\pi$ and randomness)
\end{center}

\subsection{Fiat--Shamir signature construction}

To obtain a non-interactive signature scheme, we apply the Fiat--Shamir transform~\cite{FIAT} to the identification protocol in the random-oracle model. The signer first computes the round commitments using fresh permutations and fresh randomness, then evaluates
\[
h=\mathcal H(\mathrm{Encode}(G,k,C^{(0)},\dots,C^{(t-1)},M))
\]
on the public data, the commitments, and the message \(M\), where \(\mathrm{Encode}(\cdot)\) is a canonical encoding procedure. A public hash-to-edges parser deterministically derives the challenge edges from \(h\), and the signer includes the openings corresponding to the endpoints of those edges. Verification recomputes the same hash from \((G,k)\), the commitments, and \(M\), derives the same challenge edges, and checks the openings together with color distinctness.

\paragraph{Challenge derivation.}
The challenge edges are derived deterministically from the hash value \(h\). More precisely, we sample \(t\) edge indices with replacement so as to match the independent public-coin challenges of the identification protocol. To avoid modulo bias, the mapping from hash output to edge indices is implemented using domain-separated rejection sampling.

\begin{algorithm}[H]
\caption{\textsf{HashToEdges}$\,(h, t, E)$}
\begin{algorithmic}[1]
\State $m\gets |E|$;\quad $b\gets \lambda$ \Comment{$\lambda$-bit blocks from $\mathcal H$}
\For{$i=0$ to $t-1$}
  \State $j\gets 0$
  \Repeat
    \State $B \gets \mathcal H(\texttt{``EdgeDerive-v1''}\ \Vert\ h\ \Vert\langle i\rangle_{32}\Vert\langle j\rangle_{32})$
    \State $x \gets \mathrm{int}(B)$ \Comment{interpret $B$ as a big-endian integer in $[0,2^{\lambda}\!-\!1]$}
    \State $M \gets \big\lfloor 2^{\lambda}/m \big\rfloor\cdot m$
    \State $j \gets j+1$
  \Until{$x < M$} \Comment{rejection sampling removes modulo bias}
  \State $\mathrm{idx}_i \gets x \bmod m$;\quad $e_i \gets E[\mathrm{idx}_i]$
\EndFor
\State \Return $(e_0,\dots,e_{t-1})$
\end{algorithmic}
\end{algorithm}

\paragraph{Public inputs and secret key.}
Public parameters are $(G=(V,E),k,f,\mathcal H,\lambda)$, where $V=\{1,\dots,n\}$,
$E\subseteq \{\{u,v\}\mid 1\le u<v\le n\}$, $f$ is a statistically hiding, computationally binding
commitment, and $\mathcal H:\{0,1\}^*\to\{0,1\}^\lambda$ is modeled as a random oracle.
The \emph{secret key} is a valid $k$-coloring $\phi:V\to\{1,\dots,k\}$.

\paragraph{Canonical encoding with domain separation.}

We follow standard practice for domain separation \cite{SP800-185,RFC9380} and canonical serialization
\cite{RFC6979,RFC5869}, using a fixed domain-separation tag $\mathsf{TAG}=\texttt{``FS-GkColor-v1''}$ and
a canonical serializer $\mathrm{Encode}(\cdot)$:

\[
\small
\begin{aligned}
\mathrm{Encode}(G,k,X_0,\dots,X_{t-1},M)
&=\ \langle \mathsf{TAG}\rangle\ \Vert\
   \langle n\rangle_{64}\ \Vert\ \langle k\rangle_{32}\ \Vert\
   \langle m\rangle_{64}\ \Vert\ \mathrm{Edges}(E) \\
&\quad\Vert\ \langle t\rangle_{32}\ \Vert\
   X_0\Vert\cdots\Vert X_{t-1}\ \Vert\
   \langle |M|\rangle_{64}\ \Vert\ M.
\end{aligned}
\]
Here $\langle\cdot\rangle_{b}$ is a $b$-bit big-endian length/value encoding, and $\Vert$ is concatenation.
We fix a vertex order $1<\cdots<n$ and encode edges as
\[
\mathrm{Edges}(E)=\langle m\rangle_{64}\ \Vert\
\big\lVert_{\{u,v\}\in E^{\uparrow}} \big(\langle u\rangle_{\lceil\log_2 n\rceil}\ \Vert\ \langle v\rangle_{\lceil\log_2 n\rceil}\big),
\]
where $E^{\uparrow}$ lists edges with $u<v$ in lexicographic order. This canonicalization ensures all parties
hash identical byte strings.

\subsubsection{\Eidolon{} signature scheme}

\Eidolon{} uses as public parameters a graph \(G=(V,E)\), an integer \(k\in\mathbb N\), a statistically hiding and computationally binding commitment
\[
f:\{1,\dots,k\}\times\{0,1\}^r\to\{0,1\}^s,
\]
and a hash function \(\mathcal H:\{0,1\}^*\to\{0,1\}^\lambda\), modeled as a random oracle. The secret key is a valid \(k\)-coloring \(\phi:V\to\{1,\dots,k\}\). Let \(m=|E|\) and set \(t=m^2\).

To sign a message \(M\in\{0,1\}^*\), the signer samples, for each \(i\in\{0,\dots,t-1\}\), a fresh permutation \(\pi_i\in S_k\) and, for every vertex \(v\in V\), fresh randomness \(r_v^{(i)}\), and computes
\[
c_v^{(i)} = f\big(\pi_i(\phi(v)),\, r_v^{(i)}\big).
\]
Let \(C^{(i)}=(c_v^{(i)})_{v\in V}\). Then compute
\[
h = \mathcal H\!\big(\mathrm{Encode}(G,k,C^{(0)},\dots,C^{(t-1)},M)\big),
\]
where \(\mathrm{Encode}(\cdot)\) is a fixed canonical encoding with domain separation. The value \(h\) is parsed deterministically into a sequence of challenge edges \((e_0,\dots,e_{t-1})\), where \(e_i=(u_i,v_i)\in E\). For each \(i\), the signer includes the openings
\[
\big(\pi_i(\phi(u_i)),\, r_{u_i}^{(i)}\big)
\qquad\text{and}\qquad
\big(\pi_i(\phi(v_i)),\, r_{v_i}^{(i)}\big).
\]
The signature is
\[
\sigma = \Big(\,\{C^{(i)}\}_{i=0}^{t-1},\; \{\pi_i(\phi(u_i)),r_{u_i}^{(i)},\pi_i(\phi(v_i)),r_{v_i}^{(i)}\}_{i=0}^{t-1}\,\Big).
\]
\begin{algorithm}[H]
\caption{\textsc{Eidolon.Sign}$(G,k,\phi,M)$}
\begin{algorithmic}[1]
\State \textbf{Input:} message \(M\), private coloring \(\phi:V\to\{1,\dots,k\}\)
\For{$i=0,\dots,t-1$}
    \State choose a fresh random permutation \(\pi_i\in S_k\)
    \For{each \(v\in V\)}
        \State choose fresh randomness \(r_v^{(i)}\) and compute
        \[
        c_v^{(i)} \gets f(\pi_i(\phi(v)),\, r_v^{(i)}).
        \]
    \EndFor
    \State set \(C^{(i)} \gets (c_v^{(i)})_{v\in V}\)
\EndFor
\State Compute
\[
h \gets \mathcal H\big(\mathrm{Encode}(G,k,C^{(0)},\dots,C^{(t-1)},M)\big).
\]
\State Parse \(h\) deterministically to obtain challenges \(e_0,\dots,e_{t-1}\), where each \(e_i\) encodes an edge \((u_i,v_i)\in E\)
\For{$i=0,\dots,t-1$}
    \State set
    \[
    \mathit{open}_i \gets \big(\pi_i(\phi(u_i)),\, r_{u_i}^{(i)},\; \pi_i(\phi(v_i)),\, r_{v_i}^{(i)}\big)
    \]
\EndFor
\State Output signature
\[
\sigma \gets \Big( \{C^{(i)}\}_{i=0}^{t-1},\; \{\mathit{open}_i\}_{i=0}^{t-1} \Big)
\]
\end{algorithmic}
\end{algorithm}

\begin{algorithm}[H]
\caption{\textsc{Eidolon.Verify}$(G,k,M,\sigma)$}
\begin{algorithmic}[1]
\State \textbf{Input:} public parameters \((G,k)\), message \(M\), signature \(\sigma\)
\State Parse \(\sigma\) as \(\{C^{(i)}\}_{i=0}^{t-1}\) and openings \(\{\mathit{open}_i\}_{i=0}^{t-1}\)
\State Recompute
\[
h' \gets \mathcal H\big(\mathrm{Encode}(G,k,C^{(0)},\dots,C^{(t-1)},M)\big)
\]
\State Parse \(h'\) into \(e_0',\dots,e_{t-1}'\), where \(e_i'=(u_i',v_i')\)
\For{$i=0,\dots,t-1$}
    \State Let \(\mathit{open}_i=(\alpha_u,r_u,\alpha_v,r_v)\)
    \State Check that
    \[
    f(\alpha_u,r_u) \stackrel{?}{=} c_{u_i'}^{(i)}
    \qquad\text{and}\qquad
    f(\alpha_v,r_v) \stackrel{?}{=} c_{v_i'}^{(i)}.
    \]
    \If{either equality fails}
        \State \Return Reject
    \EndIf
    \If{\(\alpha_u=\alpha_v\) or \(\alpha_u,\alpha_v\notin\{1,\dots,k\}\)}
        \State \Return Reject
    \EndIf
\EndFor
\State \Return Accept
\end{algorithmic}
\end{algorithm}

\subsection{Merkle compression of commitments}

In the plain scheme, each round \(i\) includes the full commitment vector
\[
C^{(i)}=(c_v^{(i)})_{v\in V},
\qquad
c_v^{(i)} = f\big(\pi_i(\phi(v)),\, r_v^{(i)}\big)\in\{0,1\}^s.
\]
If all \(t\) commitment vectors are included in the signature, the size is
\[
|\sigma|_{\mathrm{plain}}
\;\approx\;
t\,n\,s + 2t\bigl(|\alpha|+|r|\bigr),
\]
where \(n=|V|\), \(s\) is the commitment length, \(|\alpha|\le \lceil\log_2 k\rceil\), and \(|r|\) is the commitment randomness length. This yields asymptotic size \(O(tn)\).

We compress each vector \(C^{(i)}\) to a single Merkle root \(R_i\) using a collision-resistant hash function \(H:\{0,1\}^*\to\{0,1\}^\lambda\). Let \(\mathrm{enc}(v)\) be a fixed-length encoding of the vertex index \(v\), and define
\[
L_v^{(i)} = H\big(\mathsf{leaf}\,\|\,\mathrm{enc}(v)\,\|\,c_v^{(i)}\big).
\]

The hash function \(\mathcal H\) is modeled as a random oracle. For concrete instantiations, one should use a standardized hash with output length and security strength matched to the target security level.

A binary Merkle tree is built over the leaves \((L_v^{(i)})_{v\in V}\), and \(R_i\) denotes its root. Instead of publishing the whole vector \(C^{(i)}\), the signer publishes only \(R_i\). For each challenged vertex, the signature includes the corresponding opening \((\alpha,r)\) together with a Merkle authentication path.

In the Fiat--Shamir transformation, the hash is applied to the roots rather than to the full commitment vectors:
\[
h = \mathcal H\!\big(\mathrm{Encode}(G,k,R_0,\dots,R_{t-1},M)\big).
\]
The challenge edges are then derived from \(h\) as before. For each round \(i\), if the challenge is \(e_i=(u_i,v_i)\), the signer reveals
\[
(\alpha_{u_i},r_{u_i},\mathsf{path}_{u_i}^{(i)})
\qquad\text{and}\qquad
(\alpha_{v_i},r_{v_i},\mathsf{path}_{v_i}^{(i)}),
\]
where \(\alpha_{u_i}=\pi_i(\phi(u_i))\) and \(\alpha_{v_i}=\pi_i(\phi(v_i))\).

\begin{lemma}
Suppose \(H\) is collision-resistant and \(f\) is binding. Fix a round \(i\) and a vertex index \(v\). Given the root \(R_i\), it is infeasible to produce two distinct valid openings for position \(v\).
\end{lemma}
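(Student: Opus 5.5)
We argue by reduction: from an adversary $\mathcal A$ that, given $R_i$, outputs two distinct valid openings for position $v$, we extract either a collision for $H$ or a double opening of $f$. A valid opening for position $v$ is a triple $(\alpha,r,\mathsf{path})$ such that, with $c=f(\alpha,r)$ and $L=H(\mathsf{leaf}\,\|\,\mathrm{enc}(v)\,\|\,c)$, hashing $L$ upward along $\mathsf{path}$ reproduces $R_i$. Here the position $v$ fixes the left/right pattern of the path. Take two valid openings $(\alpha,r,\mathsf{path})\neq(\alpha',r',\mathsf{path}')$, with associated values $c,c'$ and leaves $L,L'$. We assume the usual domain separation: internal nodes are hashed as $H(\mathsf{node}\,\|\,x\,\|\,y)$ with a tag distinct from $\mathsf{leaf}$, and all encodings are fixed-length and unambiguous. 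This is implicit in the paper's leaf tag.

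\textbf{Step 1 (path comparison).} Let $x_0=L,x_1,\dots,x_d=R_i$ be the node values recomputed along the first path, and $x'_0=L',\dots,x'_d=R_i$ along the second. Both paths follow the same left/right pattern determined by $v$. Let $j$ be the largest index with $(x_{j-1},\text{sibling}_{j-1})\neq(x'_{j-1},\text{sibling}'_{j-1})$, if such a $j$ exists. Since $x_j=x'_j$, the two distinct inputs at level $j-1$ hash to the same value under $H$, and the fixed-length encoding makes the two preimage strings genuinely distinct. This is a collision for $H$. If no such $j$ exists, then $\mathsf{path}=\mathsf{path}'$ and $L=L'$.

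\textbf{Step 2 (leaf layer).} Now $L=L'$. If $c\neq c'$, then $\mathsf{leaf}\,\|\,\mathrm{enc}(v)\,\|\,c$ and $\mathsf{leaf}\,\|\,\mathrm{enc}(v)\,\|\,c'$ are distinct strings with the same $H$-value, again a collision. Hence $c=c'$.

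\textbf{Step 3 (commitment layer).} Since the openings differ but their paths agree, we have $(\alpha,r)\neq(\alpha',r')$ with $f(\alpha,r)=f(\alpha',r')=c$. If $\alpha\neq\alpha'$, this breaks the binding of $f$. If $\alpha=\alpha'$ and only $r\neq r'$ differ, the opened color is the same, so the openings are not distinct in any meaningful sense. We therefore read ``distinct valid openings'' as openings to distinct values $\alpha\neq\alpha'$, which is the standard binding notion. Under that reading, the reduction succeeds with probability at least that of $\mathcal A$, up to polynomial overhead, contradicting the assumptions.

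\textbf{Main obstacle.} Step 1 is routine in its combinatorics. The real work is in two places. First, we must pin down the encoding and domain-separation assumptions so that "same hash output, different node input" is a genuine $H$-collision. This includes ruling out confusion between leaf and internal nodes, and handling a possibly unbalanced tree when $n$ is not a power of two. Second, we must state precisely which notion of "distinct opening" the lemma refers to, as discussed in Step 3.
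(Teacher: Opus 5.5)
Your proof is correct and follows the same route as the paper's: equal commitment values reduce to the binding of $f$, and unequal ones give a collision in the Merkle tree. Your top-down path comparison, including the case of differing authentication paths, spells out the collision step that the paper only asserts. Your Step 3 caveat is also sound: the paper treats any $(\alpha,r)\neq(\alpha',r')$ with $f(\alpha,r)=f(\alpha',r')$ as a binding violation, which requires a strong-binding notion for $f$, so that assumption or your restriction to $\alpha\neq\alpha'$ is genuinely needed.
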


\begin{proof}
If two distinct openings \((\alpha,r)\neq(\alpha',r')\) produce the same commitment value, then the binding property of \(f\) is violated. Otherwise the corresponding leaves differ. If both leaves authenticate to the same root \(R_i\), this yields a collision in the Merkle tree, contradicting the collision resistance of \(H\).
\end{proof}

Each Merkle root contributes \(\lambda\) bits, and each authentication path has length \(\lceil\log_2 n\rceil\). The resulting signature size is therefore
\[
|\sigma|_{\mathrm{Merkle}}
\;\approx\;
t\lambda
+
2t\bigl(|\alpha|+|r|+\lambda\lceil\log_2 n\rceil\bigr),
\]
which gives asymptotic size \(O(t\log n)\).

A Python prototype of the Merkle-compressed construction was implemented to validate the encoding, challenge derivation, and signature-size formulas. On a small prototype instance, the measured signature sizes matched the analytic expressions derived above.

\section{Security model and analysis}

We use the standard notion of existential unforgeability under adaptive chosen-message attack (EUF--CMA) for signature schemes, and we analyze the construction in the random-oracle model.

\begin{proposition}[Basic consistency of a valid forgery]
Assume that \(f\) is computationally binding, that \(H\) is collision-resistant, and that \(\mathcal H\) is modeled as a random oracle. Let
\[
\sigma=\bigl(\{R_i\}_{i=0}^{t-1},\{\mathit{open}_i\}_{i=0}^{t-1}\bigr)
\]
be a signature accepted by the verifier for a message \(M\). Then, except with negligible probability, for every round \(i\), the opening \(\mathit{open}_i\) is consistent with the challenged edge derived from
\[
h=\mathcal H(\mathrm{Encode}(G,k,R_0,\dots,R_{t-1},M)),
\]
and with the commitments authenticated under the Merkle root \(R_i\). In particular, any successful forgery must either:
\begin{enumerate}
\item produce valid openings for all challenged positions under the corresponding Merkle roots, or
\item violate the binding of \(f\) or the collision resistance of \(H\).
\end{enumerate}
\end{proposition}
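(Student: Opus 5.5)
The proof is essentially an unwinding of the verification algorithm, combined with the Merkle binding lemma above. Fix an accepted signature $\sigma=(\{R_i\},\{\mathit{open}_i\})$ for $M$.

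\emph{Step 1: the challenges are fixed by the roots.} Since \textsc{Eidolon.Verify} recomputes $h$ itself, acceptance already forces the challenge edges to be the ones determined by the roots and the message. Concretely, the verifier computes $h=\mathcal H(\mathrm{Encode}(G,k,R_0,\dots,R_{t-1},M))$ and applies the deterministic parser \textsf{HashToEdges}. The encoding is canonical and injective: fixed-width fields and length prefixes for $t$ and $|M|$. Hence distinct tuples $(R_0,\dots,R_{t-1},M)$ give distinct oracle inputs. Two different tuples could therefore only yield the same challenge sequence through an oracle collision, which has probability at most $q^2/2^{\lambda}$ for $q$ oracle queries. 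Outside this event, each $\mathit{open}_i$ refers exactly to the endpoints $(u_i,v_i)$ of the edge $e_i$ derived from $h$. This gives the first consistency claim.

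\emph{Step 2: the openings are authenticated under the roots.} Acceptance of $\mathit{open}_i$ means that for each endpoint $w\in\{u_i,v_i\}$ the following holds. The value $(\alpha_w,r_w)$ yields a commitment $c=f(\alpha_w,r_w)$, and the leaf $H(\mathsf{leaf}\Vert\mathrm{enc}(w)\Vert c)$, combined with $\mathsf{path}^{(i)}_w$, hashes up to $R_i$. In addition, $\alpha_{u_i}\neq\alpha_{v_i}$ and both lie in $\{1,\dots,k\}$. Thus every accepted signature supplies valid openings for all challenged positions under the corresponding roots, which is alternative (1).

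\emph{Step 3: the dichotomy.} Here ``consistent with the commitments authenticated under $R_i$'' should mean uniqueness of the opening. Suppose the forger's openings are not the unique openings associated with $R_i$. Concretely, suppose there exist two distinct accepted openings for the same round $i$ and position $w$ under the same root, for example across two signatures or two forking runs. Then by the preceding Lemma we obtain either a binding break of $f$ (same commitment value, different $(\alpha,r)$) or a collision in $H$ (different leaves, or differing internal nodes along the two paths to the same root). A reduction that runs the forger and outputs this pair breaks the corresponding assumption. The position tag $\mathrm{enc}(w)$ in the leaf and the $\mathsf{leaf}$ domain separator rule out reinterpreting an internal node as a leaf. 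Therefore, except with negligible probability, either (1) holds with openings uniquely determined by the roots, or (2) occurs.

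\emph{Main obstacle.} The delicate point is making the Merkle extraction airtight. One has to argue that two valid authentication paths for the same index and root must diverge at some level where the hashed inputs differ but the outputs agree. The argument walks from the root downward to the first level where the node values differ, which yields an explicit $H$-collision. This step needs domain separation between leaf and internal-node hashing, so I would assume internal nodes are hashed with a distinct tag (e.g.\ $\mathsf{node}$). The rest, including the random-oracle bound in Step 1, is routine.
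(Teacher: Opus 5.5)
Your proposal is correct and follows essentially the same route as the paper's proof: unwind the deterministic verifier, then use the Merkle-path check (collision resistance of $H$) and the commitment check (binding of $f$) to conclude that accepted openings are uniquely determined by $R_i$, exactly as in the preceding Lemma. Two of your additions are not needed: the random-oracle collision bound in Step 1 is superfluous, because the verifier itself recomputes $h$ from the published roots and $M$, so consistency with the derived challenge edges holds deterministically for every accepted signature; and a separate internal-node tag is unnecessary for the Merkle extraction, because both authentication paths share the same index and the fixed depth $\lceil\log_2 n\rceil$, so the root-to-leaf walk already yields either an $H$-collision or, at the leaf, a binding break of $f$.
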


\begin{proof}
Let
\[
h=\mathcal H(\mathrm{Encode}(G,k,R_0,\dots,R_{t-1},M))
\]
be the hash value recomputed by the verifier, and let
\[
(e_0,\dots,e_{t-1})
\]
be the sequence of challenged edges derived from \(h\). Since the verifier is deterministic once \((G,k,M,\sigma)\) is fixed, acceptance means that, for each round \(i\), the verifier has parsed \(\mathit{open}_i\), extracted the challenged edge \(e_i=(u_i,v_i)\), and accepted all checks attached to that round.

Fix a round \(i\). By construction of the verification algorithm, acceptance implies that the verifier has checked the Merkle authentication paths for the two challenged endpoints \(u_i\) and \(v_i\), starting from the corresponding leaf values and ending at the published root \(R_i\). Therefore, if one of these two openings were not consistent with \(R_i\), then either the verifier would reject, or two distinct leaf values would authenticate to the same root. In the latter case, this would contradict the collision resistance of \(H\).

Next, the verifier also checks that the revealed pairs \((\alpha_u,r_u)\) and \((\alpha_v,r_v)\) open the corresponding commitments. Hence, if an accepted opening at one of the challenged positions admitted two distinct valid decommitments, this would violate the binding property of \(f\).

Therefore, except with negligible probability, every accepted round is simultaneously consistent with the challenged edge derived from the hash value, with the openings checked against the commitment function \(f\), and with the Merkle root \(R_i\). Applying the same argument to all rounds yields the claim.
\end{proof}

The proposition above only establishes consistency of accepted transcripts with the published commitment structure. A complete EUF--CMA reduction would additionally require a witness-extraction argument for the Fiat--Shamir transform in this setting.

\section{Security Analysis and Attack Models}
\label{sec:attacks}

\subsection{Exact attacks}

To set appropriate parameters and evaluate the security of \textbf{Eidolon}, we analyze the practical difficulty of recovering the secret $k$-coloring from the public graph using a well-known algorithmic approach. In the context of this paper, we are interested in any coloring of the random graph with $k$ or less colors, as this allows us to impersonate the prover. For a comprehensive analysis of the problem, see Mann \cite{Mann}. We use the \textbf{exactcolors} algorithm \cite{HCS12}, which is based on branch-and-bound with a linear programming solver (Gurobi). This is currently one of the most efficient solvers for the coloring problem on ER-graphs \cite{BFHM}.

We begin by confirming the properties of the random graph and the difficulty of finding the chromatic number of a random graph. Figure \ref{fig:ERgraphs} illustrates the chromatic numbers found using the \textbf{exactcolors} solver for graphs drawn from $G_R(n,\frac{1}{2})$ with between 10 and 52 vertices and the minimal coloring found using the DSatur heuristic. We choose $p=1/2$ as this maximizes the complexity of the graphs and the associated coloring problem. The Bollabas upper and lower bounds are also plotted. It is clear that the actual chromatic numbers are much larger than the  asymptotic formula, which does not apply in this range. Instead, the chromatic number seems to follow a power law with best-fit equation $\chi(n,1/2)=0.88n^{0.611}$. 
\begin{figure}[H]
    \begin{center}
        \includegraphics[width=1.0\linewidth]{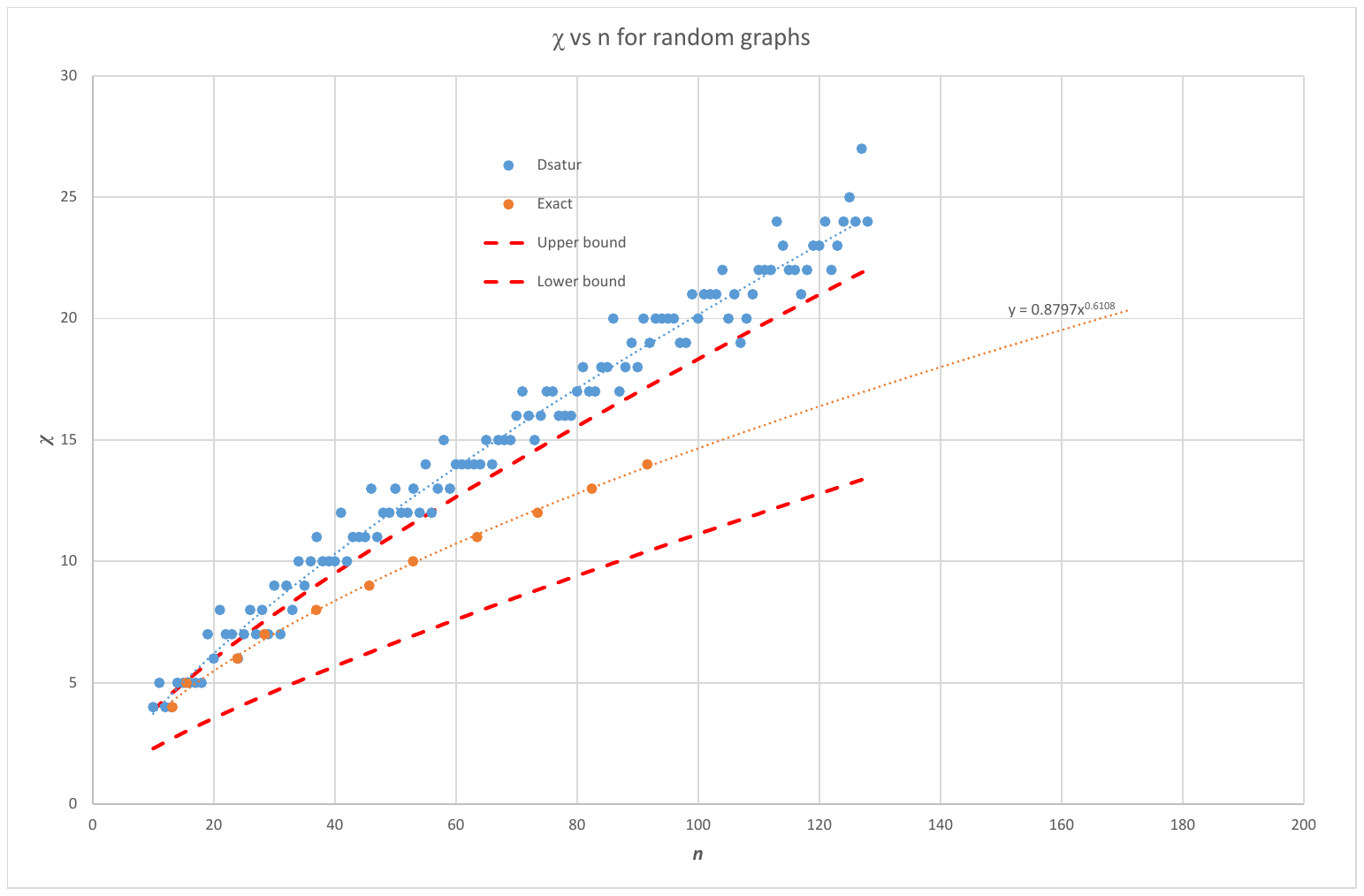}
    \end{center}
    \caption{The observed chromatic number of random graphs using an exact algorithm and the minimal coloring discovered by the DSatur algorithm. The upper and lower bounds of $\chi$ due to Bollob\'as are also plotted.}
    \label{fig:ERgraphs}
\end{figure}

The secret key is embedded as a planted coloring in the graph. It seems natural to choose $\chi$ as the number of colors for the key, but embedding an additional solution may change the difficulty of the problem. To confirm the correct number of colors, we analyzed the solution time to find a coloring of size $k$ embedded in graphs with expected $\chi=13(n=72)$, $\chi=14(n=82)$ and $\chi=15(n=93)$ (Figure \ref{fig:difficulty}. The peak difficulty occurs either at $k=\chi$ or $k=\chi+1$, and so we choose $k=\chi$.

\begin{figure}[H]
    \begin{center}
        \includegraphics[width=1.0\linewidth]{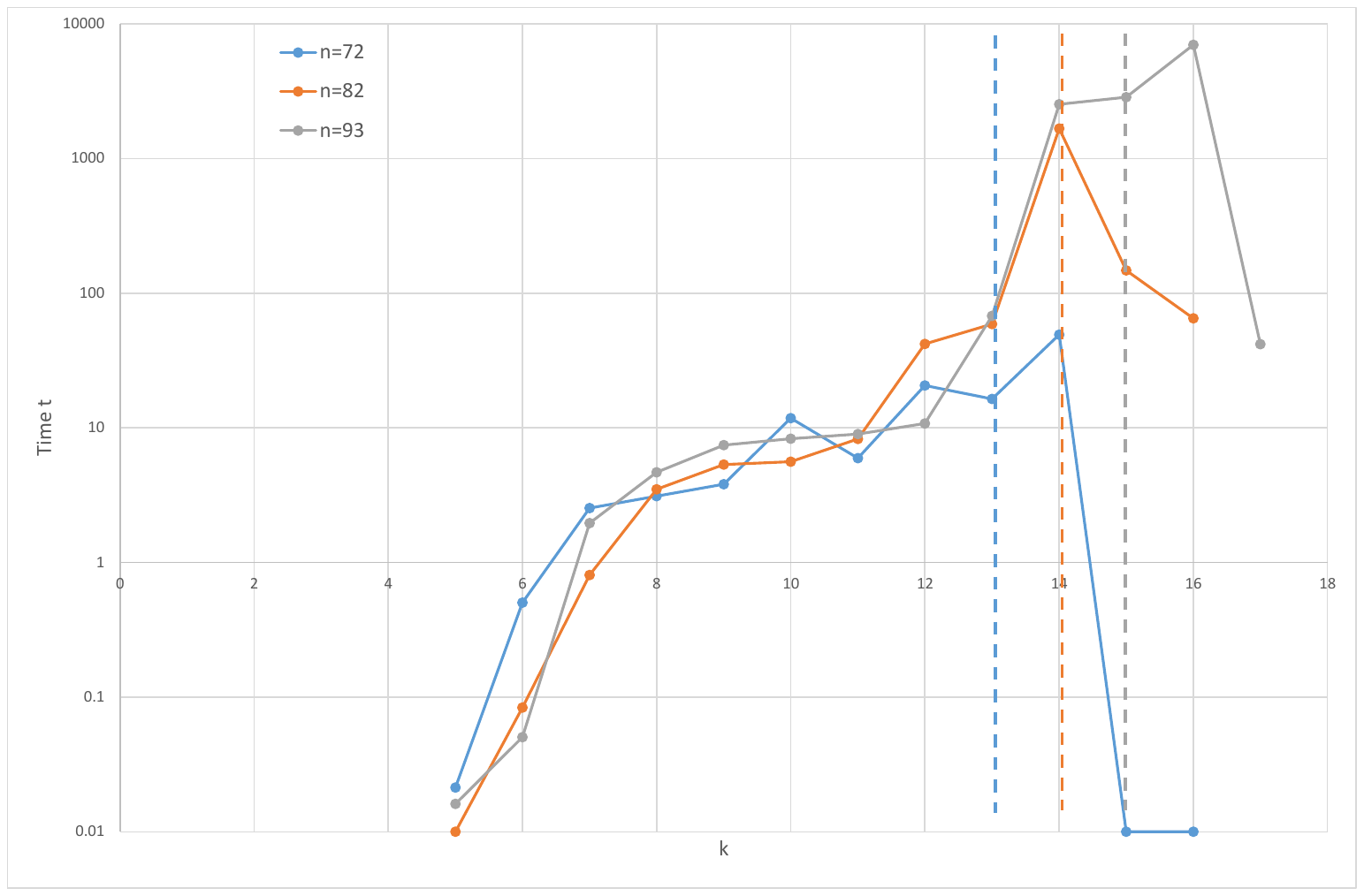}
    \end{center}
    \caption{The time taken to find the minimal coloring of graphs with different planted solution size $k$ for graphs of size 72, 82 and 93. The vertical dashed lines represent the expected chromatic number of the equivalent ER graphs.}
    \label{fig:difficulty}
\end{figure}

%We now look at planted solutions using the algorithm described in Section \ref{PlantedSolution} with $n=36$ and $s=1/2$, with $p$ selected as above. The time to find a solution of the correct size is shown in Figure \ref{fig:PlantedGraphs}. As we might hope, the time-to-solution is exponential in $k$, up to a threshold value, where it becomes constant. At this point, the natural minimum coloring of the equivalent random graph has less colors that the planted solution and is found by the algorithm. The computational complexity is therefore bounded by this solution and constant (for fixed $n,p$). As in the previous examples, we observe that the transition occurs at $k=8$ not $n/2(log_2 n-1)\approx 4$. We should therefore select $k=\chi$ for maximum difficulty of the k-coloring problem.

%\begin{figure}[H]
%    \begin{center}
%        \includegraphics[width=1.0\linewidth]{exactplanted.png}
%    \end{center}
%    \caption{The time taken for our exact solver to find a minimal coloring of the graph with a planted coloring of size $k$. Below $k=8$, the algorithm discovers the planted solution. Above this point, the algorithm finds a natural coloring of the random graph with size around 8.}
%    \label{fig:PlantedGraphs}
%\end{figure}

Finally, we demonstrate the relationship between $n$ and solution time in Figure \ref{fig:n_vs_t}. We select $k$ for the planted solution to match the empirical prediction of $\chi$ for the equivalent random graph. The solution times appears, empirically, to be super-exponential. These results were produced on a AMD EPYC 7501 at 2.6GHz with 512GB of RAM.

\begin{figure}[H]
    \begin{center}
        \includegraphics[width=1.0\linewidth]{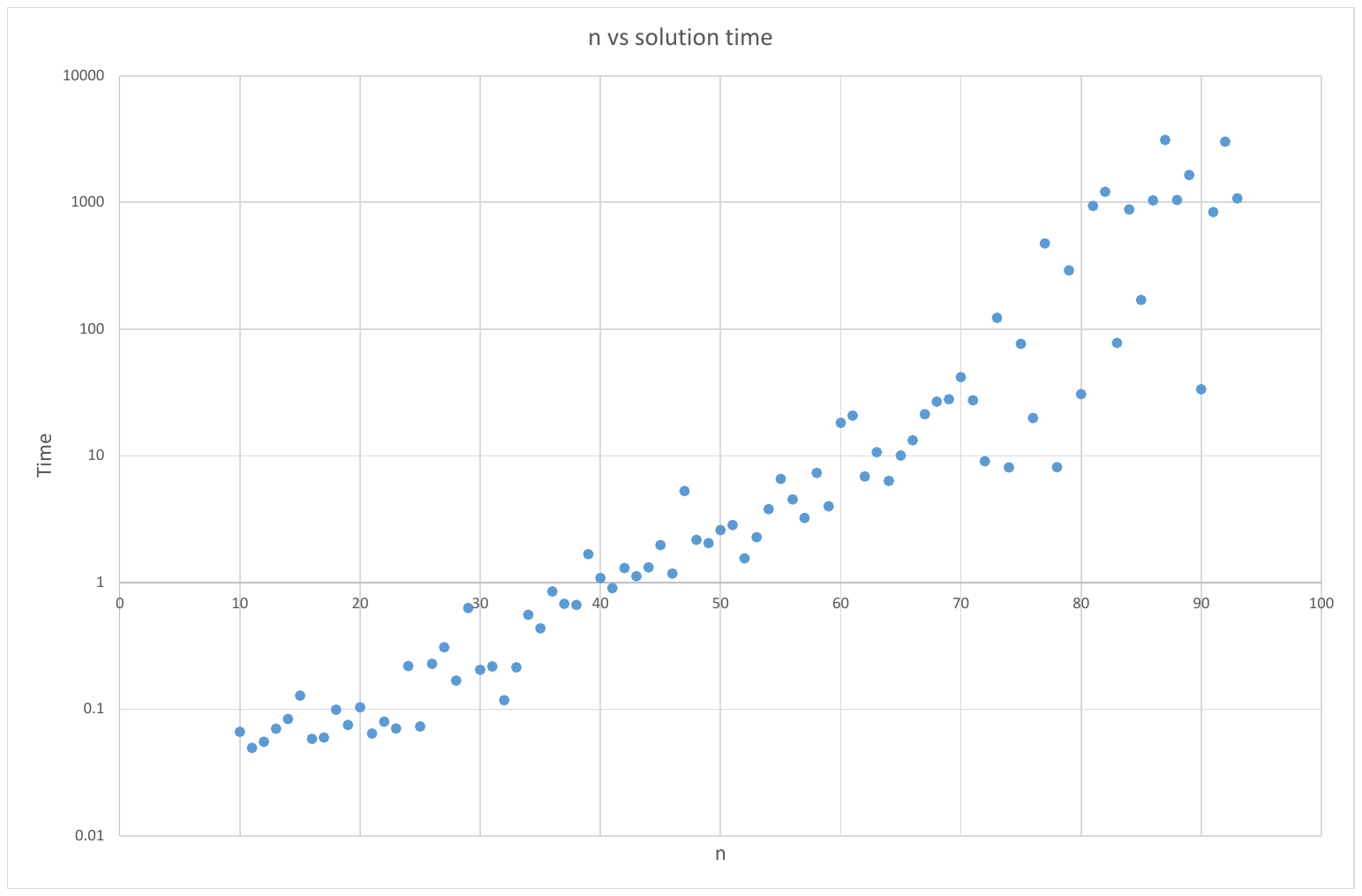}
    \end{center}
    \caption{The time-to-solution for the `exactcolors' solver on planted graphs with planted $k$ chosen according to the empirical power law.}
    \label{fig:n_vs_t}
\end{figure}

\subsection{Heuristic attacks}

The Dsatur algorithm \cite{Brelaz}  is a well-known greedy algorithm for coloring graphs. The algorithm sequentially selects vertices for coloring on the basis of the maximum current degree of saturation. It can recover the minimal coloring for small graphs but is known to be limited for larger graphs. We can see this in in Figure \ref{fig:ERgraphs}, where the size of the solution found by Dsatur is shown, along with the chromatic number discovered by the exact algorithm. While the method performs well for small graphs, a gap is evident for larger ones and DSatur cannot recover the minimal coloring for graphs larger than $n=40$. This result is confirmed in Figure \ref{fig:plantedgraphs} for planted graphs, where we can see identical behavior to the ER-graphs.

In \cite{GHK}, Gryak, Haralick, Kahrobaei have used machine learning algorithms to solve one of the algorithmic problems, known as conjugacy decision problem for certain classes of groups. This technique has been used for the cryptanalysis of proposed schemes using the conjugacy problem.
Machine learning and pattern recognition techniques have been successfully applied to algorithmic problems in free groups. In \cite{GHK}, the authors seek to extend these techniques to finitely presented non-free groups, with a particular emphasis on polycyclic and metabelian groups that are of interest to non-commutative cryptography. 
As a prototypical example, they utilize supervised learning methods to construct classifiers that can solve the conjugacy decision problem, i.e., determine whether or not a pair of elements from a specified group are conjugate. The accuracies of classifiers created using decision trees, random forests, and N-tuple neural network models are evaluated for several non-free groups. The very high accuracy of these classifiers suggests an underlying mathematical relationship with respect to conjugacy in the tested groups.

%\href{https://www.researchgate.net/publication/367084178_Solving_Graph_Coloring_Problem_via_Graph_Neural_Network_GNN}{Solving Graph Coloring Problem via Graph Neural Network (GNN)} to compare with our results.

Based on these results, we propose a modified version of the Graph Neural Network framework to tackle the GCP. The method is based on that of Schuetz et al \cite{Schuetzetal}. The method uses a set of random features initially assigned to the vertices of the graph. The GNN is then used to decode these features into color labels for the vertices. The network learns through a cost function to minimize the number of color conflicts in the assignment.  We use the GraphSAGE variant described in the paper with one hidden layer. We optimized the parameter settings to find $d_0=45, d_1=40,\beta=0.0091$, dropout=0.6, epochs=20000.

We use a modified cost function suggested by Porumbel et al \cite{PorumbelHaoKuntz} which weights color violations by the inverse of the degree, favoring violations with a smaller number of neighbors to resolve. If the softmax output of the network is $\mathbf{p}_k$ for each color $k$, then the standard cost function is given by $\sum_k\mathbf{p}_k^T\mathbf{A}\mathbf{p}_k$. The modified cost function is
\begin{equation}
\sum_k\mathbf{p}_k^T(\mathbf{I}-\mathbf{D}^{-1})\mathbf{A}\mathbf{p}_k
\end{equation}
This down-weights the cost of violations on low degree vertices, as these are potentially easier to resolve.

\footnote{Our GNN-based attack code is available at \url{https://github.com/EsmaMaysan/GNN-CSP/blob/main/GNN+CSP.py}.}

Since the network is fixed, the expected number of colors, $k$, must be chosen in advance. This is not straightforward, as we found that values of $k$ higher than the known chromatic number of the graph produced better solutions in some cases. However, since the goal is to recover a coloring equivalent to the planted coloring (in order to break the security of the key), we use the same $k$ as the planted coloring.

We found considerable improvement from the network by initializing it with the labels given by the DSatur algorithm. This is achieved by pre-training the network with the DSatur labelling as the target output, using 70 epochs, $\beta=0.08$ and a cross-entropy loss. Since the DSatur algorithm will typically produce more than $k$ colors, the coloring is reduced by replacing the excess colors with the least-conflicted of the initial $k$ colors before training.

Finally, the network does not usually converge to a non-conflicted configuration (i.e. a graph coloring). We therefore post-process the labelling in a similar fashion to \cite{Schuetzetal}, by relabelling conflicted nodes with the smallest number of additional colors. 

\begin{figure}[H]
    \begin{center}
        \includegraphics[width=1.0\linewidth]{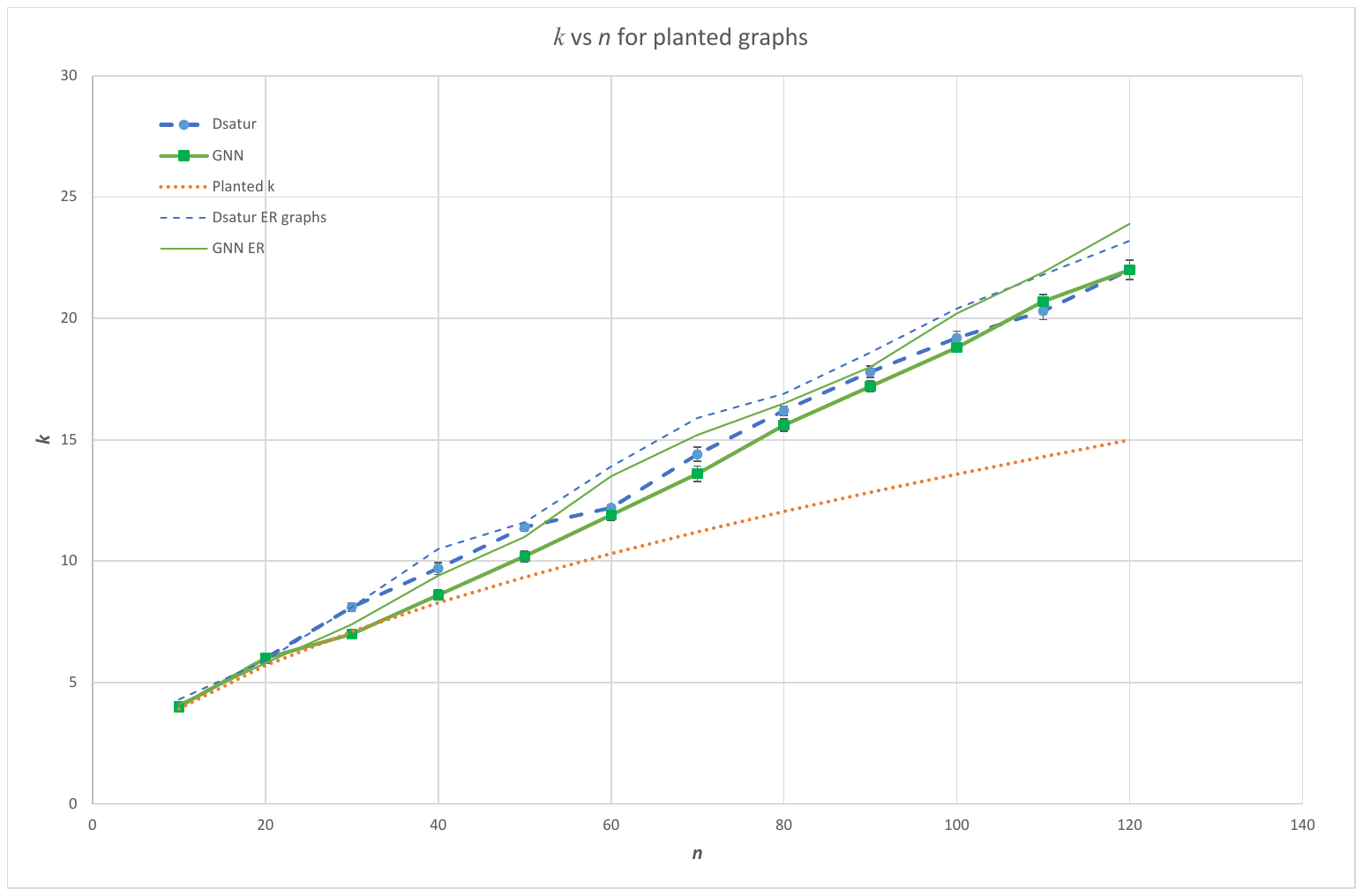}
    \end{center}
    \caption{The size of the recovered colorings for planted graphs using approximate algorithms. The dotted line shows the number of colors $k$ used in the planted graphs. The dashed line gives the number of colors used by DSatur, and the solid line those used by the GNN algorithm. The lighter lines show the results for these algorithms on ER graphs, i.e. random graphs with no planted solution.}
    \label{fig:plantedgraphs}
\end{figure}

The results are shown in Figure \ref{fig:plantedgraphs}. The GNN algorithm marginally outperforms the Dsatur algorithm for intermediate sized graphs ($n\in [20,60]$), but cannot recover the planted coloring for $n\geq 20$. Above $n=60$, the performance is very similar to Dsatur. This suggests for $n>60$ it is very challenging to recover the key with known algorithms. The plot also contains results for ER random graphs (lighter lines) showing very similar performance to the planted graphs.

\section{Conclusion}
We have presented \Eidolon{}, a practical post-quantum digital signature scheme based on the $k$-colorability problem, combining a generalized zero-knowledge protocol, Merkle-tree compression, and carefully constructed hard instances via planted colorings. 

Our empirical security analysis provides a detailed evaluation of the resistance of the scheme against both exact and heuristic attacks. In particular, the results obtained using the \textbf{exactcolors} solver indicate that the time required to recover a valid $k$-coloring grows super-exponentially with the graph size when parameters are chosen according to the empirical chromatic number. This confirms that recovering the planted coloring becomes computationally infeasible even for moderate values of $n$.

On the heuristic side, both the DSatur algorithm and the proposed GNN-based approach fail to recover the planted coloring beyond small graph sizes. While the GNN model shows slight improvements over DSatur for intermediate sizes ($n \in [20,60]$), it does not scale effectively and exhibits similar limitations for larger graphs. In particular, for $n \geq 60$, neither method is able to produce a valid coloring matching the planted solution, and their performance remains close to that observed on random graphs without planted structure.

Overall, these results demonstrate that the hardness of the underlying combinatorial problem is preserved even in the presence of modern heuristic and machine learning techniques. This supports the viability of graph coloring as a foundation for post-quantum cryptographic constructions when parameters are carefully selected.

Future work includes refining parameter selection to align with standardized post-quantum security levels, improving the efficiency of the construction, and further investigating the resilience of the scheme against more advanced attacks, including hybrid and adaptive strategies.

\section*{Acknowledgements}
The authors acknowledge the support from
the Institut Henri Poincaré (UAR 839 CNRS-Sorbonne Université) and LabEx CARMIN (ANR-10-LABX-59-01). This project started from discussions with Farinaz Koushanfar (UCSD) by Delaram Kahrobaei. Consequently restarted during the Trimester on Post-quantum Algebraic Cryptography at IHP. DK conducted this work partially with the support of ONR Grant 62909-24-1-2002. DK thank Institut des Hautes \'Etudes Scientifiques - IHES for providing stimulating environment while this project was partially done.
DK was supported by the CARMIN fellowship during the completion of this project. RF thanks IMUS-Maria de Maeztu grant CEX2024-001517-M - Apoyo a Unidades de Excelencia María de Maeztu for supporting this research, funded by MICIU/AEI/ 10.13039/501100011033".

%\bibliographystyle{amsalpha}
%\nocite{*}
%\bibliography{refs}

\providecommand{\bysame}{\leavevmode\hbox to3em{\hrulefill}\thinspace}
\providecommand{\MR}{\relax\ifhmode\unskip\space\fi MR }
% \MRhref is called by the amsart/book/proc definition of \MR.
\providecommand{\MRhref}[2]{%
  \href{http://www.ams.org/mathscinet-getitem?mr=#1}{#2}
}
\providecommand{\href}[2]{#2}

\appendix

\section{Standard security definitions}

\begin{definition}[Probabilistic polynomial-time algorithms and negligible functions {\cite[Sec.~1.3]{GoldreichFoC1},\cite[Sec.~3.1]{KatzLindell}}]
An algorithm \(\mathcal{A}\) is \emph{probabilistic polynomial time} (PPT) if there exists a polynomial \(p(\cdot)\) such that, for every input \(x\in\{0,1\}^n\) and every random coin string \(r\), the execution \(\mathcal{A}(x;r)\) halts within at most \(p(n)\) steps. A function \(\mu:\mathbb{N}\to[0,1]\) is \emph{negligible} if, for every polynomial \(q(\cdot)\), there exists \(N\) such that \(\mu(n)<1/q(n)\) for all \(n\ge N\).
\end{definition}

\begin{definition}[EUF--CMA security {\cite[p.~21]{GoldwasserMicaliRivest},\cite[Ch.~12]{KatzLindell}}]
Let \(\Pi=(\mathsf{Gen},\mathsf{Sign},\mathsf{Vfy})\) be a signature scheme with security parameter \(\lambda\). The EUF--CMA experiment is defined as follows.

\begin{enumerate}
\item \emph{Setup:} The challenger samples \((\mathsf{pk},\mathsf{sk})\leftarrow \mathsf{Gen}(1^\lambda)\) and gives \(\mathsf{pk}\) to the adversary \(\mathcal{A}\).
\item \emph{Signing queries:} The adversary \(\mathcal{A}\) is given adaptive oracle access to \(\mathsf{Sign}_{\mathsf{sk}}(\cdot)\). For each queried message \(m\), it receives a signature \(\sigma\leftarrow \mathsf{Sign}_{\mathsf{sk}}(m)\). Let \(Q\) denote the set of queried messages.
\item \emph{Forgery:} Eventually, \(\mathcal{A}\) outputs a pair \((m^\star,\sigma^\star)\).
\end{enumerate}

The adversary wins if
\[
\mathsf{Vfy}_{\mathsf{pk}}(m^\star,\sigma^\star)=1
\qquad\text{and}\qquad
m^\star\notin Q.
\]
The scheme \(\Pi\) is \emph{EUF--CMA secure} if every PPT adversary wins with only negligible probability in \(\lambda\).
\end{definition}

\end{document}